\DeclareRobustCommand*{\bfseries}{%
  \not@math@alphabet\bfseries\mathbf
  \fontseries\bfdefault\selectfont
  \boldmath
}
\renewcommand{\paragraph}[1]{\medskip\noindent{\bf #1}}
\newcommand{\notewarning}{%
\ifnum\totvalue{notecount}>0%
 \vspace{1ex}
\begin{center}
 \begin{tikzpicture}[baseline=(A.south)]
    \node (A) [] at (0,0){};
    \node [rounded corners=1pt,rectangle, draw=red, fill=red!20,text=black](B) at (0.1ex,0ex){
        \Large \raggedright {\bf Warning:} There are still some notes left!
    };
 \end{tikzpicture}
\end{center}
 \vspace{1ex}
\fi
}
\def\myaddcontentsline#1#2#3{%
  \addtocontents{#1}{\protect\contentsline{#2}{#3}{Section \thesubsection\ at p. \thepage}{}}}
\renewcommand{\@todonotes@addElementToListOfTodos}{%
    \if@todonotes@colorinlistoftodos%
        \myaddcontentsline{tdo}{todo}{{%
            \colorbox{\@todonotes@currentbackgroundcolor}%
                {\textcolor{\@todonotes@currentbackgroundcolor}{o}}%
            \ \@todonotes@caption}}%
    \else%
        \myaddcontentsline{tdo}{todo}{{\@todonotes@caption}}%
   \fi}%
\newcommand*\mylistoftodos{%
  \begingroup
       \setbox\@tempboxa\hbox{Section 9.9 at p. 99}%
       \renewcommand*\@tocrmarg{\the\wd\@tempboxa}%
       \renewcommand*\@pnumwidth{\the\wd\@tempboxa}%
       \listoftodos%
  \endgroup
}
\definecolor{lightgreen}{rgb}{0.86, 0.93, 0.78}
\definecolor{bordergreen}{rgb}{0.55, 0.76, 0.74}
\definecolor{lightblue}{rgb}{0.70, 0.90, 0.99}
\definecolor{borderblue}{rgb}{0.01, 0.66, 0.96}
\definecolor{lightamber}{rgb}{1, 0.93, 0.70}
\definecolor{borderamber}{rgb}{1, 0.76, 0.03}
\definecolor{lightcolor4}{rgb}{ 0.93, 0.70, 1}
\definecolor{bordercolor4}{rgb}{0.76, 0.03, 1}
\definecolor{lightcolor5}{rgb}{0.78,0.86,0.93}
\definecolor{bordercolor5}{rgb}{0.74,0.55,0.76}
\algnewcommand{\ExtendedState}[1]{\State
\parbox[t]{\dimexpr\linewidth-\ALG@thistlm}{\hangindent=\algorithmicindent\strut\hangafter=3#1\strut}}
\algnewcommand\algorithmicinput{\textbf{Input:}}
\algnewcommand\Input{\item[\algorithmicinput]}
\algrenewcommand{\algorithmiccomment}[1]{{\color{gray}// #1}}
\algnewcommand{\IIf}[1]{\State\algorithmicif\ #1\ \algorithmicthen}
\algnewcommand{\EndIIf}{\unskip\ \algorithmicend\ \algorithmicif}
 \newtcolorbox{titlebox}[5]{enhanced,center,colframe=black,colback=white,boxrule={#3},arc={#2},auto outer arc,%
 breakable,pad at break*=5pt,vfill before first,before={
 },after={\par\smallskip},top=12pt,left=4pt,%
 enlarge top by=2pt,
 fontupper=\small,
 title={\rule[-.3\baselineskip]{0pt}{\baselineskip}\normalsize\sffamily\bfseries #1}, varwidth boxed title*=-30pt, 
 attach boxed title to top left={yshift=-10pt,xshift=10pt}, coltitle=black,
 boxed title style={colback=white,boxrule={#5},arc={#4},auto outer arc}
 }
\let\orgdescriptionlabel\descriptionlabel
\renewcommand*{\descriptionlabel}[1]{%
  \let\orglabel\label
  \let\label\@gobble
  \phantomsection
  \edef\@currentlabel{#1}%
  \let\label\orglabel
  \orgdescriptionlabel{#1}%
}
\newcommand{\party}{P\xspace}
\newcommand{\unknowndelta}[0]{\textsc{UL}}
\newcommand{\gst}[0]{\textsc{GST}}
\newcommand{\clock}[0]{\textsc{Clock}}
\newcommand{\slowclock}[0]{\textsc{SlowClock}}
\newcommand{\realclock}[0]{\textsc{RealClock}}
\newcommand{\sentmessages}[0]{\textsc{SentMsgs}}
\newcommand{\receivedmessages}[0]{\textsc{ReceivedMsgs}}
\newcommand{\msg}{\textsc{msg}}
\newcommand{\events}[0]{\textsc{events}}
\newcommand{\protocols}[0]{\mathbold{\Pi}}
\newtheorem{theorem}{Theorem}
\newtheorem{lemma}[theorem]{Lemma}
\title{Unifying Partial Synchrony}
\author{Andrei Constantinescu \\ \href{mailto:aconstantine@ethz.ch}{aconstantine@ethz.ch} \and Diana Ghinea \\ \href{mailto:ghinead@ethz.ch}{ghinead@ethz.ch} \and Jakub Sliwinski \\ \href{mailto:jsliwinski@ethz.ch}{jsliwinski@ethz.ch} \and Roger Wattenhofer \\ \href{mailto:wattenhofer@ethz.ch}{wattenhofer@ethz.ch}}
\date{ETH Zürich, Switzerland}
\begin{document}

\maketitle

\begin{abstract}
	The distributed computing literature considers multiple options for modeling communication. Most simply, communication is categorized as either synchronous or asynchronous. Synchronous communication assumes that messages get delivered within a publicly known timeframe and that parties' clocks are synchronized. Asynchronous communication, on the other hand, only assumes that messages get delivered eventually.
	A more nuanced approach, or a middle ground between the two extremes, is given by the \emph{partially synchronous model}, which is arguably the most realistic option. This model comes in two commonly considered flavors:
	\begin{enumerate}[nosep,label=(\roman*)]
		\item The Global Stabilization Time ($\gst$) model: after an (unknown) amount of time, the network becomes synchronous. This captures scenarios where network issues are transient.
		\item The Unknown Latency ($\unknowndelta$) model: the network is, in fact, synchronous, but the message delay bound is unknown.
	\end{enumerate}
    This work formally establishes that any time-agnostic property that can be achieved by a protocol in the $\unknowndelta$ model can also be achieved by a (possibly different) protocol in the $\gst$ model. By time-agnostic, we mean properties that can depend on the order in which events happen but not on time as measured by the parties. Most properties considered in distributed computing are time-agnostic. The converse was already known, even without the time-agnostic requirement, so our result shows that the two network conditions are, under one sensible assumption, equally demanding.
\end{abstract}






\section{Introduction}

Distributed computing systems underpin a vast array of contemporary technological advancements, ranging from cloud computing platforms to blockchain networks.  These systems rely on protocols to ensure consistency and reliability even when faced with challenges such as message delays and node failures. A cornerstone of designing robust protocols lies in understanding the communication model assumed by the distributed system. Within the distributed computing literature, the synchronous and asynchronous communication models remain the two best-established paradigms.
The synchronous model assumes a publicly known upper bound $\Delta$ on message delays and that parties hold synchronized clocks. 
This idealized setting facilitates the design of elegant protocols that can leverage a round-based structure and often achieve very high resilience thresholds. However, the synchronous model exhibits a fundamental limitation: any deviation from the assumed message delay bound $\Delta$  can render synchronous protocols entirely ineffective, potentially leading to complete breakdowns of the protocols.

The asynchronous model, on the other hand, makes no assumptions except that messages get delivered eventually. 
This inherent flexibility empowers the asynchronous model to support 
protocols that can gracefully adapt to any network conditions. However, asynchronous protocols typically exhibit lower resilience thresholds compared to their synchronous counterparts. Furthermore, even achieving agreement when parties might crash is impossible without randomization \cite{FLP}.
Hence, neither of these two extremes perfectly captures real-world systems: the synchronous model's assumptions are too strong, while the asynchronous model is too pessimistic.  
In this work, we are concerned with a middle ground between the two --- the \emph{partially synchronous} model, a nuanced paradigm that bridges the gap between the two, introduced by Dwork, Lynch, and Stockmeyer \cite{JACM:DLS88}. The work of \cite{JACM:DLS88} proposes two definitions for the partially synchronous model, described below (see the next section for the fully formal definitions).

\paragraph{The Global Stabilization Time ($\gst$) model.} This variant acknowledges that there might be periods of unpredictable delays due to network congestion or outages but also assumes that these disruptions eventually resolve and the system stabilizes. \cite{JACM:DLS88} explains how this intuition can be faithfully captured with a simple model, known as the Global Stabilization Time ($\gst$) model: there is an unknown `Global Stabilization Time' $T$ after which the system behaves synchronously for a publicly-known message delivery bound $\Delta$. In particular, there is a publicly known amount of time $\Delta$ such that every message sent at time $t$ is delivered by time $\max(t, T) + \Delta$; i.e., messages sent at time $t \geq T$ are delivered by time $t + \Delta$, while messages sent at time $t < T$ are delivered by time $T + \Delta$.

\paragraph{The Unknown Latency ($\unknowndelta$) model.} 
In this variant, the system, is in fact, always synchronous: there is a value $\Delta$ such that every message sent by time $t$ is delivered by time $t + \Delta$. However, as opposed to the synchronous model, the value of $\Delta$ is unknown to the protocol.

\paragraph{The relationship between the two.} The two models are conjectured to be equivalent, in the sense that any property that can be achieved by a protocol in one can also be achieved in the other. In fact, there is an elegant folklore reduction from the $\unknowndelta$ model to the $\gst$ model, presented in \cite{ittaiBlog}, which we explain next.
Consider a protocol $\Pi$ achieving some property $X$ in the $\unknowndelta$ model. Let us run $\Pi$ in the $\gst$ model, where a value of $\Delta$ is provided and guaranteed to hold eventually, but $\Pi$ ignores it. Consider any execution $\varepsilon$ of $\Pi$ in this setting: by the guarantees of the model, there exists a time $T$ such that all messages in $\varepsilon$  sent at time $t$ get delivered by time $\max(t, T) + \Delta$. Hence, in $\varepsilon$ all messages get delivered within time $T + \Delta$, so from the perspective of the parties, they might just as well be running in $\unknowndelta$ with the unknown bound on message delay being $T + \Delta$. Hence, $\varepsilon$ is also a legal execution of $\Pi$ in $\unknowndelta$. As a result, the set of executions of $\Pi$ in $\gst$ is a subset of its set of executions in $\unknowndelta$. Since $\Pi$ satisfies $X$ in $\unknowndelta$, it also satisfies $X$ in $\gst$. 
The same blog post \cite{ittaiBlog} also explains the reverse direction: a protocol designed for the $\gst$ model can be transformed into an equivalent protocol for the $\unknowndelta$ model, but only if it satisfies a certain property, namely, that the protocol's guarantees are still maintained if the assumed value of $\Delta$ changes dynamically. This way, one may increment the assumed $\Delta$ whenever a timeout of the protocol expires, and eventually, the assumed $\Delta$ will exceed the real one. However, as \cite{ittaiBlog} points out, assuming this property is not without loss of generality. To the best of our knowledge, whether the converse holds is still an open question. 

\paragraph{Our contribution.} In this work, we answer this question in the affirmative under the relatively minor technical assumption of only considering `time-agnostic' properties. A protocol property is \emph{time-agnostic} if whether it holds for a given execution of a protocol can only depend on the relative order in which events happened, but not on time as measured by the parties in the execution. We note that most properties considered in distributed computing are indeed time-agnostic; e.g., whether some consensus protocol satisfies given agreement, validity and termination conditions. Bounds on message complexity can also be accommodated, but the same is not true about running time guarantees. Additionally, we will only show our result assuming that the environment provides a global perfect clock to the parties, that is their only way of telling time. Our proof seems straightforward, albeit technical, to adapt for imperfect party clocks. On a similar note, we consider randomized protocols, but do not consider probabilistic properties, such as ``with probability at least $0.5$ all parties terminate''.\footnote{E.g., a randomized byzantine agreement protocol is required to always satisfy the agreement property.} We leave a formal argument considering imperfect clocks and probabilistic properties for future work. On the other hand, our proof works in adversarial settings, like when nodes may crash or deviate arbitrarily from the protocol. The key idea in our proof is that, instead of estimating the actual value of $\Delta$ in the $\unknowndelta$ model, like in the proof idea mentioned in the blog, we continuously slow down the parties' clocks. This is achieved by the parties applying a wrapper function on top of the time measurements returned by the system clock. This way, the parties simulate running the $\gst$ protocol with a continuously increasing value of $\Delta,$ which will eventually exceed the actual unknown $\Delta$ holding in the $\unknowndelta$ model, hence allowing the guarantees of the $\gst$ protocol that we are running to apply.

\section{Preliminaries}


We consider a fixed set of $n$ parties in a network, where links model communication channels. 
The parties are running a (possibly randomized) protocol over the network. For each party, the protocol is specified by a state transition diagram, where a party's state is defined by its local variables. The initial state of a party is then defined by any initial inputs and random coins. The transitions are deterministic (but may depend on the party's random coins). Without loss of generality, a party's transition to another state is triggered by the receipt of a message, or specific changes in time (e.g., waiting a predefined amount of time). State changes are instantaneous and include all required local computations and the corresponding sending of messages (i.e., these instructions are executed atomically). The receipt of messages, on the other hand, will be controlled by the \emph{message system}, which we discuss below.

\paragraph{Messages.} 
Messages are held in a global \emph{message system}: this maintains a set containing tuples $(P_s, P_r, m, c)$, where $P_s$ is the sender of the message, $P_r$ is a receiver of the message, $m$ is the content of the message, and $c$ is a unique identifier assigned by the message system.
The message system is controlled by the adversary and may decide when to deliver these messages (subject to the constraints of the communication model). For simplicity of presentation, we assume that the message system keeps delivering messages even after the receivers have terminated (if the protocol allows it, or, e.g., if we consider crash faults). Otherwise, claims of the form ``eventually all messages get delivered within $\Delta$ time units'' would not be meaningful for terminated receivers.


\paragraph{Global clock.} We assume that parties have access to a common \emph{global clock} denoted by $\clock$, which represents their only source of time. Abstractly, $\clock$ is represented by an increasing and continuous function $\clock : \mathbb{R}_{\geq 0} \to \mathbb{R}_{\geq 0}$ that maps real time to \emph{system time}. In particular, at real time $t,$ the parties can atomically query the global clock to read off a `system time' of  $\clock(t).$ Neither the parties nor the adversary have access to the actual definition of the function $\clock$. Instead, they can only use the global clock as an oracle to receive the current \emph{system time}. Depending on the environment, the system time may coincide with real time, in which case $\clock$ is the identity function $\realclock(t) = t$, but this is not necessarily the case.




\paragraph{Protocol execution models.} A protocol execution model $M$ captures all requirements and guarantees of the environment under which a protocol runs. For the purposes of our paper, communication in $M$ always happens through message passing as already described, and, moreover, $M$ specifies a global clock function $\clock$ that the parties use to tell time. Other aspects of the execution environment can appear as part of the guarantees of $M$, such as bounds on the message delay or other timing constraints. Two examples of such models $M$ are the $\gst(\Delta, \clock)$ and $\unknowndelta(\Delta, \clock)$ models, to be formally introduced below. Note that the guarantees of a fixed model $M$ are concrete: e.g., messages are delivered within $\Delta$ time units for a fixed $\Delta$; in contrast, often in the literature, models usually refer to families of models (in this particular case, parameterized by $\Delta$). Last but not least, $M$ specifies the power of the adversary. Other than controlling the scheduler within the timing constraints of the model, the adversary might additionally, for instance, make parties crash, fail to send certain messages or deviate from the protocol arbitrarily (i.e., byzantine behavior). Model $M$ should specify precisely which faults are possible and under what circumstances (e.g., whether the adversary is adaptive, computationally bounded, and how many parties it can corrupt). Importantly, the parties are not aware of the clock function used: from their perspective, this is supplied by the environment as an oracle, with no access to its implementation. More abstractly, a model $M$ specifies for each protocol $\Pi$ its set of legal executions $\varepsilon$, defined in the following.

\paragraph{Executions.} Consider a protocol $\Pi$ running in a model $M$ where parties measure time using function $\clock$. An execution of $\Pi$ is defined by the parties' initial states and a (possibly infinite) collection of events,
denoted by $\events(\varepsilon)$. 
Each event in $\events(\varepsilon)$ is a tuple $(t,$ $\receivedmessages,$ $\party,$ $q,$ $\sentmessages)$ signifying that, at system time $t$ (i.e., as observed by the parties using function $\clock$), party $\party$ received the (possibly empty) multiset of messages $\receivedmessages$ from the message system, $\party$'s state became $q$ (possibly the same state it was already in), and $\party$ sent the (possibly empty) multiset of messages $\sentmessages$ to the message system. We say that a message $\msg = (P_s, P_r, m, c)$ was sent at system time $t$ in execution $\varepsilon$ if $\events(\varepsilon)$ contains some event $(t, \receivedmessages, \party_s, q, \sentmessages)$ with $\msg \in \sentmessages$.
Similarly, we say that a message $\msg = (P_s, P_r, m, c)$ was received at system time $t$ in execution $\varepsilon$ if $\events(\varepsilon)$ contains some event $(t, \receivedmessages, \party_r, q, \sentmessages)$ with $\msg \in \receivedmessages$. Note that a message sent/received at system time $t$ is sent/received at real time $\clock^{-1}(t)$. We have made the deliberate choice to timestamp executions in system time as this is the perspective that parties perceive them from. This will allow us to map between executions with different clock functions in our main result.

\paragraph{The $\gst$ model.} The $\gst$ model has as parameters a clock function $\clock$ that the environment provides to the parties to tell the time when running a protocol, and $\Delta$, to be supplied to protocols designed for the model when instantiated for a specific $\Delta$. We write $\gst(\Delta, \clock)$ for the model instantiated with specific parameters $\Delta$ and $\clock$. The model guarantees that, for every protocol $\Pi$ and every execution $\varepsilon$ of $\Pi$ in the model, there exists a time $T$ measured in real time such that every message in $\varepsilon$ sent at real time $t$ is received by real time $\max(t, T) + \Delta$. The model can be altered to give the adversary more power than controlling the scheduler; e.g., to corrupt parties.

\paragraph{The $\unknowndelta$ model.} The $\unknowndelta$ model has as parameters a clock function $\clock$ that the environment similarly provides to the parties, and $\Delta$, not to be supplied to protocols designed for the model. We write $\unknowndelta(\Delta, \clock)$ for the model instantiated with specific parameters $\Delta$ and $\clock$. The model guarantees that, for every protocol $\Pi$ and every execution $\varepsilon$ of $\Pi$ in the model, any message in $\varepsilon$ sent at real time $t$ is received by real time $t + \Delta$. 
This model can also be altered to give more power to the adversary.

\paragraph{Protocol properties.} We define a protocol property as a set of allowed executions; e.g., the property that all parties eventually terminate, or that they produce some outputs. We say that a protocol achieves a property in a model $M$ if all its legal executions in that model satisfy the property, i.e., are in the set of executions allowed by the property. 
Note that modeling certain properties this way is non-trivial, as executions alone do not contain, e.g., who are the byzantine parties and when they were corrupted.
However, even such properties can be modeled: executions may contain changes of states that do not follow from the protocol’s state transition to model parties misbehaving, or one can modify executions to include corruption events to make the process more transparent. In this paper, we are concerned with time-agnostic properties, defined next. We call two executions $\varepsilon, \varepsilon'$ \emph{equivalent} if they differ only in the timestamps of the events and agree on the relative order of the events. A property $X$ is \emph{time-agnostic} if for any two equivalent executions $\varepsilon, \varepsilon'$ it holds that $\varepsilon \in X \iff \varepsilon' \in X$.



\paragraph{Augmented models.} 
Our result will be very general: we will consider an arbitrary protocol $\Pi$ designed for the $\gst$ model, instantiated with a publicly-known eventual message delay bound of 1, that satisfies a given time-agnostic property in $\gst(1, \realclock)$. We will show how $\Pi$ can be transformed into a protocol $\Pi'$ that only depends on $\Pi$ that satisfies the same property in $\unknowndelta(\Delta, \realclock)$, irrespective of the value of $\Delta$. Moreover, if we augment both the $\gst$ model and the $\unknowndelta$ model with the same kind of additional power for the adversary, the same statement holds, with the same proof. For simplicity, in the following, we assume the basic models, but we note that we also get the result for a plethora of more interesting fault settings, e.g., byzantine faults and crashes.

\section{Our Reduction} 

This section presents the proof of our main result, stated below.

\begin{theorem} \label{thm:main} Any time-agnostic property that can be achieved by a protocol in the $\gst$ model can also be achieved by a protocol in the $\unknowndelta$ model.
\end{theorem}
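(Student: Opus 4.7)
The plan is to build $\Pi'$ by having each party internally run $\Pi$ while composing the clock oracle with a fixed ``slow-down'' function $f\colon\mathbb{R}_{\ge 0}\to\mathbb{R}_{\ge 0}$ that is continuous, strictly increasing, and whose derivative tends to $0$ at infinity. A concrete choice that is completely independent of $\Delta$, and hence available to $\Pi'$ even though the true latency is unknown, is $f(t)=\sqrt{t}$. Whenever $\Pi$ queries its clock and the underlying $\realclock$ reads $t$, the wrapper returns $f(t)$ instead; $\Pi$'s state transitions and the messages it sends and receives are passed through unchanged. The crucial property of $f$ is that for every fixed $\Delta>0$ there exists $T_0=T_0(\Delta)$ such that $f(t+\Delta)-f(t)\le 1$ for all $t\ge T_0$; for instance, $f(t+\Delta)-f(t)\le \Delta/(2\sqrt{t})$ when $f=\sqrt{\cdot}\,$.

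To argue that $\Pi'$ inherits any time-agnostic property $X$ of $\Pi$, I would fix an arbitrary execution $\varepsilon'$ of $\Pi'$ in $\unknowndelta(\Delta,\realclock)$ and construct a companion execution $\varepsilon$ of $\Pi$ in $\gst(1,\realclock)$ by rewriting every event $(t,\receivedmessages,\party,q,\sentmessages)$ of $\varepsilon'$ as $(f(t),\receivedmessages,\party,q,\sentmessages)$. Because inside $\varepsilon'$ the machine $\Pi$ perceives time only through the wrapper, its transitions are driven by the clock value $f(t)$; in $\varepsilon$, the same clock value is what $\Pi$ observes at the relabelled real time $f(t)$ (since $\clock=\realclock$ is the identity in $\gst(1,\realclock)$), so $\varepsilon$ faithfully follows $\Pi$'s code. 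Since $f$ is strictly increasing, the events of $\varepsilon$ and $\varepsilon'$ occur in the same relative order and differ only in timestamps, so they are equivalent in the sense of the preliminaries.

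It remains to verify the $\gst(1,\realclock)$ delivery bound for $\varepsilon$. A message sent in $\varepsilon'$ at real time $t$ and delivered by real time $t+\Delta$ corresponds in $\varepsilon$ to a message sent at real time $\tau=f(t)$ and delivered by real time at most $f(t+\Delta)$. Set $T:=f(T_0+\Delta)$. For $t\ge T_0$, delivery occurs by $f(t)+1=\tau+1\le\max(\tau,T)+1$; for $t<T_0$, delivery occurs by $f(T_0+\Delta)=T\le\max(\tau,T)+1$. Hence $T$ is a valid GST time and $\varepsilon$ is a legal execution of $\Pi$ in $\gst(1,\realclock)$. Since $\Pi$ achieves $X$ on every legal execution in that model we have $\varepsilon\in X$, and by time-agnosticism $\varepsilon'\in X$. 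As $\varepsilon'$ and $\Delta$ were arbitrary, $\Pi'$ achieves $X$ in $\unknowndelta(\Delta,\realclock)$ for every $\Delta$.

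I expect the main obstacle to be not the arithmetic above, which is immediate once $f$ is fixed, but the conceptual bookkeeping behind the map $\varepsilon'\mapsto\varepsilon$: one must make precise that relabelling timestamps by $f$ produces a genuine $\Pi$-execution in $\gst(1,\realclock)$, correctly handling message sends, message receipts, and state transitions triggered by the passage of (system) time, as well as agreement on event orderings. Since the clock is the parties' only source of time and is queried atomically, this should go through cleanly, but it does require a careful formal setup of what it means to ``lift'' an execution across models with different clock functions.
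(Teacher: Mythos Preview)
Your proposal is correct and follows essentially the same approach as the paper: both wrap the clock with $\sqrt{\cdot}$, relabel timestamps to obtain an equivalent execution, and verify the $\gst(1,\realclock)$ delivery bound via the eventual contraction $\sqrt{t+\Delta}-\sqrt{t}\le 1$. The only organizational difference is that the paper splits the argument into two lemmas (first showing that an execution in $\unknowndelta(\Delta,\slowclock)$ is already an execution in $\gst(1,\realclock)$, then lifting to $\Pi'$), whereas you map $\varepsilon'$ directly to $\gst(1,\realclock)$; the paper also makes explicit that a $\gst$ protocol is a family indexed by $\Delta$ and that one instantiates with $\Pi(1)$, which you leave implicit.
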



As previously mentioned, the key idea behind our reduction will be slowing down time.
Given a protocol $\Pi$ achieving some time-agnostic property $X$ in $\gst(1, \realclock),$ we construct a protocol $\Pi'$ such that any execution $\varepsilon'$ of $\Pi'$ in $\unknowndelta(\Delta, \realclock)$ for some $\Delta$ unbeknownst to the protocol is equivalent to a legal execution $\varepsilon$ of $\Pi$ in $\gst(1, \realclock),$ hence also achieving property $X$.

Protocol $\Pi'$ will simulate running protocol $\Pi$ with a modified system clock that continuously slows down, so that equal intervals of time measured in the simulated system will represent longer and longer spans of real time. Moreover, we need that the modified system clock eventually gets arbitrarily slow. This way, since protocol $\Pi$ is designed to have property $X$ under the assumption that, once a sufficient amount of time passes, every message gets delivered within $1$ unit of system time, this will eventually be the case: the clock will get slow enough for $1$ unit of system time to correspond to a span of real time exceeding the unknown message delay bound.

More specifically, $\Pi'$ will simulate $\Pi$ running with system clock $\slowclock  : \mathbb{R}_{\geq 0} \to \mathbb{R}_{\geq 0}$ given by $\slowclock(t) = \sqrt{t}$. To achieve the simulation, whenever a party in the simulated $\Pi$ queries the global clock and the answer would have normally been $t$ (coinciding with the real time), $\Pi'$ replaces the answer with $\slowclock(t)$. This way, from the perspective of the simulated $\Pi$, the system clock is $\slowclock$. We note that it would have sufficed to take $\slowclock$ to be any increasing function whose derivative tends to 0 as $t \to \infty$ to achieve the required apparent slowdown of time.


The first lemma below shows the required result assuming that $\Pi$ is running standalone but with system clock $\slowclock$. The second lemma lifts it to the protocol $\Pi'$ that runs with system clock $\realclock$, but simulates $\Pi$ running with system clock $\slowclock$. A short discussion of why this implies
\cref{thm:main} follows.

\begin{lemma} \label{lemma:main-lemma}
    Consider a protocol $\Pi$ and an execution $\varepsilon$ of $\Pi$ in $\unknowndelta(\Delta,$ $\slowclock)$. Then, $\varepsilon$ is also an execution of $\Pi$ in $\gst(1, \realclock)$.
\end{lemma}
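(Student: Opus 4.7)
The plan is to leverage the fact that executions in this paper are timestamped in \emph{system time}, so the same event sequence $\varepsilon$ can be re-interpreted under a different clock function simply by re-reading its timestamps as real time via the new $\clock^{-1}$. First, I would observe that a party's state transitions depend only on what it observes locally: system times returned by the clock oracle, message contents, and its random coins. Since every event of $\varepsilon$ is recorded in system time, the protocol-side conditions (that each event is a valid transition of $\Pi$ from the party's prior state) are identical whether we view $\varepsilon$ as an execution in $\unknowndelta(\Delta, \slowclock)$ or in $\gst(1, \realclock)$. So the only thing to check is the adversarial scheduling (timing) constraint of the target model.

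For that check, consider a message recorded in $\varepsilon$ as sent at system time $s_1$ and received at system time $s_2$. In $\unknowndelta(\Delta, \slowclock)$ this event corresponds to real times $\slowclock^{-1}(s_1) = s_1^2$ and $\slowclock^{-1}(s_2) = s_2^2$, so the $\unknowndelta$ guarantee yields $s_2^2 - s_1^2 \le \Delta$, i.e., $s_2 - s_1 \le \Delta/(s_1 + s_2)$. In the target model $\gst(1, \realclock)$, system time equals real time, so the real-time delay of the same message is exactly $s_2 - s_1$. The crucial point is that the bound $\Delta/(s_1 + s_2)$ shrinks as $s_1$ grows, which is precisely what enables it to eventually fit under the constant post-GST bound of $1$.

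To make this formal I would take $T = \Delta/2$ (any larger value also works) as the candidate global stabilization time and verify the $\gst(1, \realclock)$ requirement with a short case split. If $s_1 \ge T$, then $s_2 - s_1 \le \Delta/(2 s_1) \le 1$, so the message is delivered by $s_1 + 1 = \max(s_1, T) + 1$. If instead $s_1 < T$, then $s_2^2 \le s_1^2 + \Delta \le T^2 + \Delta \le T^2 + 2T + 1 = (T+1)^2$, using $2T \ge \Delta$, so $s_2 \le T + 1 = \max(s_1, T) + 1$. In either case the $\gst$ condition holds with this $T$.

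The main, albeit mild, obstacle is keeping the bookkeeping between system time and real time straight across the two models, and spotting that the square-root slowdown converts a \emph{uniform} real-time bound $\Delta$ into a system-time bound that becomes arbitrarily tight as the execution progresses, which is exactly the shape of a $\gst$-style guarantee. Once this is recognized, the rest is the brief algebraic check above.
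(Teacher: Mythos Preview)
Your proof is correct and follows essentially the same approach as the paper: both exploit that the system-time delay $\sqrt{t+\Delta}-\sqrt{t}$ (equivalently, $s_2-s_1$ under the constraint $s_2^2-s_1^2\le\Delta$) eventually drops below $1$, and then pick a suitable stabilization time. Your choice $T=\Delta/2$ is slightly looser than the paper's $T=\tfrac12(\max\{1,\Delta\}-1)$, and you handle the pre-$\gst$ case $s_1<T$ more explicitly than the paper does, but the core argument is the same.
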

\begin{proof}
Consider an execution $\varepsilon$ of $\Pi$ in $\unknowndelta(\Delta, \slowclock)$, which guarantees that any message sent at real time $t$ is delivered by real time $t + \Delta$. From the perspective of the parties, however, time is measured using $\slowclock,$ so in $\varepsilon$, the parties observe that any message sent at system time $\slowclock(t)$ is delivered by system time $\slowclock(t + \Delta).$ Let us consider $\slowclock(t + \Delta) - \slowclock(t) = \sqrt{t + \Delta} - \sqrt{t}$ to understand how the message delay observed by the parties evolves with $t$. Taking the derivative, the function is strictly decreasing with $t,$ so the observed network delay gets smaller and smaller as time passes. Subsequently, let us find a bound $t_0$ on $t$ such that starting at real time $t_0$, the observed network delay is bounded by 1; i.e., let us solve $\sqrt{t + \Delta} - \sqrt{t} \leq 1$. If $\Delta < 1,$ this happens for $t \geq 0$. Otherwise, $\Delta \geq 1,$ and this happens for $t \geq \frac{1}{4}(\Delta - 1)^2$. Hence, starting at real time $\frac{1}{4}(\max\{1, \Delta\} - 1)^2$, the observed (system) network delay is bounded by 1. Writing the same in terms of system time, starting at system time $T := \sqrt{\frac{1}{4}(\max\{1, \Delta\} - 1)^2} = \frac{1}{2}(\max\{1, \Delta\} - 1)$, the system network delay is bounded by 1. In particular, this means that a message sent at system time $t$ in $\varepsilon$ is delivered by system time $\max\{t, T\} + 1$. Hence, $\varepsilon$ could just as well be an execution of $\Pi$ in $\gst(1, \realclock)$ with global stabilization time $T$ because the parties and the adversary are unaware of the clock function used.
\end{proof}

\begin{lemma} \label{lemma:almost-there} Consider a protocol $\Pi$ achieving some time-agnostic property $X$ in $GST(1,$ $\realclock)$. Then, there is a protocol $\Pi'$ depending only on $\Pi$ that archives $X$ in $\unknowndelta(\Delta,$ $\realclock)$ for all $\Delta \geq 0$. 
\end{lemma}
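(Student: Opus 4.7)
\begin{proofsketch}
The plan is to define $\Pi'$ so that each party runs exactly the code of $\Pi$ while a wrapper intercepts every interaction with the global clock: whenever the clock oracle would return the real-time value $t$, the wrapper hands the embedded $\Pi$ the value $\slowclock(t)=\sqrt{t}$; and whenever $\Pi$ requests to be woken up at some system time $T$ of its own, $\Pi'$ schedules the wake-up at real time $\slowclock^{-1}(T)=T^2$. The messaging layer is left untouched, so the messages that $\Pi'$ actually sends and receives are exactly those produced by the embedded $\Pi$.

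I would then take an arbitrary execution $\varepsilon'$ of $\Pi'$ in $\unknowndelta(\Delta,\realclock)$ and construct an execution $\varepsilon$ of $\Pi$ from it by retagging every event with system timestamp $t$ to instead have system timestamp $\slowclock(t)$, keeping all other event data identical. The key claim is that $\varepsilon$ is a legal execution of $\Pi$ in $\unknowndelta(\Delta,\slowclock)$: the state transitions in $\varepsilon$ are precisely those $\Pi$ makes when the clock oracle it observes is $\slowclock$, which is exactly the oracle the wrapper simulates in $\varepsilon'$; and the $\unknowndelta$ delivery bound is phrased in real time, which is identical in the two views. Because $\slowclock$ is continuous and strictly increasing with $\slowclock(0)=0$, the retagging preserves the relative order of events, so $\varepsilon$ and $\varepsilon'$ are equivalent in the sense defined in the Preliminaries.

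Applying \cref{lemma:main-lemma} to $\varepsilon$ then yields that $\varepsilon$ is also a legal execution of $\Pi$ in $\gst(1,\realclock)$. Since $\Pi$ achieves the time-agnostic property $X$ in that model, $\varepsilon\in X$; by time-agnosticism and the equivalence of $\varepsilon$ with $\varepsilon'$, we conclude $\varepsilon'\in X$. As $\varepsilon'$ was an arbitrary execution of $\Pi'$ in $\unknowndelta(\Delta,\realclock)$, this shows that $\Pi'$ achieves $X$ in $\unknowndelta(\Delta,\realclock)$ for every $\Delta\geq 0$.

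The main obstacle I anticipate is making the wrapper and the retagging formally tight enough that the bijection between events of $\varepsilon'$ and events of $\varepsilon$ really does preserve the multiset of received messages, the multiset of sent messages, and the local state at each event. The delicate case is that of transitions triggered by timed waits: the wrapper has to translate $\Pi$'s system-time deadlines into real-time deadlines through $\slowclock^{-1}$ so that, under the retagging, every such transition lands exactly at the intended system time $\slowclock(t)$; once this bookkeeping is in place, monotonicity, continuity, and surjectivity of $\slowclock$ onto $\mathbb{R}_{\geq 0}$ make the correspondence well-defined everywhere, and the rest is routine.
\end{proofsketch}
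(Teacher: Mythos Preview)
Your proposal is correct and follows essentially the same route as the paper: define $\Pi'$ by wrapping every clock query with $\slowclock$, retimestamp an arbitrary execution $\varepsilon'$ of $\Pi'$ in $\unknowndelta(\Delta,\realclock)$ to obtain an equivalent execution $\varepsilon$ of $\Pi$ in $\unknowndelta(\Delta,\slowclock)$, invoke \cref{lemma:main-lemma} to place $\varepsilon$ in $\gst(1,\realclock)$, and conclude via time-agnosticism. You spell out slightly more implementation detail (the $\slowclock^{-1}$ translation for timed wake-ups and the order-preservation argument), which the paper leaves implicit, but the structure and key ideas are the same.
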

\begin{proof} In protocol $\Pi'$ parties run protocol $\Pi$ but apply the function $\slowclock : \mathbb{R}_{\geq 0} \to \mathbb{R}_{\geq 0}$ as a wrapper over the global clock's responses to the queries. In particular, whenever a party queries the global clock in $\Pi$ and the time returned is $t$, the party evaluates $\slowclock(t)$ and takes this as the answer instead. Every execution $\varepsilon'$ of $\Pi'$ in some model $M$ corresponds to an equivalent execution $\varepsilon$ of $\Pi$ in $M$ where the clock function provided by the environment is composed with $\slowclock$. Namely, every event $e$ present in $\varepsilon$ and $\varepsilon'$ is timestamped $t$ in $\varepsilon$ and $\slowclock(t)$ in $\varepsilon'$.

Hence, for any $\Delta \geq 0,$ any legal execution $\varepsilon'$ of $\Pi'$ in $\unknowndelta(\Delta, \realclock)$ corresponds to a legal execution $\varepsilon$ of $\Pi$ in $\unknowndelta(\Delta, \slowclock)$ that is equivalent to $\varepsilon'$. By \cref{lemma:main-lemma}, $\varepsilon$ is also a legal execution of $\Pi$ in $\gst(1, \realclock)$. Since $\Pi$ achieves property $X$ in $\gst(1, \realclock),$ it follows that $\varepsilon \in X$, and hence, since $X$ is time-agnostic, $\varepsilon' \in X$. Since $\varepsilon'$ was an arbitrary execution of $\Pi'$ in  $\unknowndelta(\Delta, \realclock)$ and $\Delta \geq 0$ was arbitrary, it follows that $\Pi'$ satisfies property $X$ in $\unknowndelta(\Delta, \realclock)$ for all $\Delta \geq 0$.
 \end{proof}


\begin{proof}[Proof of Theorem \ref{thm:main}]

Formally, if we write $\protocols$ for the set of all protocols, a protocol designed for the $\gst$ model is, in fact, a protocol family $\Pi : \mathbb{R}_{\geq 0} \to \protocols$, one for each potential value of the publicly-known eventual message delay bound. $\Pi$ achieves a property in $\gst(\Delta, \realclock)$ for some $\Delta \geq 0$ iff all executions of $\Pi(\Delta)$ achieve this property in $\gst(\Delta, \realclock)$. 

Let $\Pi$ be a protocol achieving some time-agnostic property $X$ in the $\gst$ model. For our proof, we only need the fact that $\Pi(1)$ satisfies $X$ in $\gst(1,$ $\realclock)$. Applying \cref{lemma:almost-there} to $\Pi(1)$, we get that $\Pi'$ satisfies $X$ in $\unknowndelta(\Delta, \realclock)$ for all $\Delta \geq 0,$ implying the conclusion.
\end{proof}

\bibliographystyle{plainurl}
\bibliography{project}

\begin{thebibliography}{1}

\bibitem{ittaiBlog}
Ittai Abraham.
\newblock Flavours of partial synchrony. {Available at}: \href{https://decentralizedthoughts.github.io/2019-09-13-flavours-of-partial-synchrony/}{\texttt{https://decentralizedthoughts.github.io/}}, 2019.

\bibitem{JACM:DLS88}
Cynthia Dwork, Nancy Lynch, and Larry Stockmeyer.
\newblock Consensus in the presence of partial synchrony.
\newblock {\em J. ACM}, 35(2):288–323, apr 1988.
\newblock \href {https://doi.org/10.1145/42282.42283} {\path{doi:10.1145/42282.42283}}.

\bibitem{FLP}
Michael~J. Fischer, Nancy~A. Lynch, and Michael~S. Paterson.
\newblock Impossibility of distributed consensus with one faulty process.
\newblock {\em J. ACM}, 32(2):374–382, apr 1985.
\newblock \href {https://doi.org/10.1145/3149.214121} {\path{doi:10.1145/3149.214121}}.

\end{thebibliography}

\end{document}